\newtheorem{assumption}{Assumption}
\newcommand{\keywords}[1]{\par\addvspace\baselineskip
\noindent\keywordname\enspace\ignorespaces#1}
\begin{document}

\mainmatter  

\title{Reachable Set Estimation and Verification for Neural Network Models of Nonlinear Dynamic Systems}

\titlerunning{Reachable Set Estimation and Verification for Neural Network Model}

%
%
\author{Weiming Xiang
\and Diego Manzanas Lopez \and Patrick Musau \and Taylor T. Johnson}
\authorrunning{W. Xiang \and  D. Manzanas \and  P. Musau \and  T. T. Johnson}

\institute{Department of Electrical Engineering and Computer Science,
\\Vanderbilt University,  Nashville, Tennessee 37212, USA\\
\mailsa}

%
%

\toctitle{Lecture Notes in Computer Science}
\tocauthor{Authors' Instructions}
\maketitle

\begin{abstract}

Neural networks have been widely used to solve complex real-world problems. Due to the complicate, nonlinear, non-convex nature of neural networks, formal safety guarantees for the behaviors of neural network systems will be crucial for their applications in safety-critical systems. In this paper, the reachable set estimation and verification problems for Nonlinear Autoregressive-Moving Average (NARMA) models in the forms of neural networks are addressed. The neural network involved in the model is a class of feed-forward neural networks called Multi-Layer Perceptron (MLP). By partitioning the input set of an MLP into a finite number of cells, a layer-by-layer computation algorithm is developed for reachable set estimation for each individual cell. The union of estimated reachable sets of all cells forms an over-approximation of reachable set of the MLP. Furthermore, an iterative reachable set estimation algorithm based on reachable set estimation for MLPs is developed for NARMA models. The safety verification can be performed by checking the existence of intersections of unsafe regions and estimated reachable set. Several numerical examples are provided to illustrate our approach.

\keywords{Neural network; Reachable set estimation; Safety verification; Nonlinear systems}
\end{abstract}

\section{Introduction}
Artificial neural networks have been widely used in machine learning systems. Applications include adaptive control \cite{hunt1992neural,ge1999adaptive}, pattern recognition \cite{schmidhuber2015deep,lawrence1997face}, game playing \cite{silver2016mastering}, autonomous vehicles \cite{bojarski2016end}, and many others. Neural networks are trained over a finite number of input and output data, and are expected to be able to generalize to produce desirable outputs for given inputs even including previously unseen inputs. Though neural networks have been showing effectiveness and powerful ability in resolving complex problems, they are confined to systems which comply only to the lowest safety integrity levels since, most of the time, a neural network is viewed as a \emph{black box} without effective methods to assure safety specifications for its outputs.  For nonlinear dynamic systems whose models are difficult or even impossible to establish, using neural network models that are inherently derived from input and output data to approximate the nonlinear dynamics is an efficient and practical way. One standard employment of neural networks is to approximate the Nonlinear Autoregressive-Moving Average (NARMA) model which is a popular model for nonlinear dynamic systems. However, once the NARMA model in the form of neural networks is established, a problem naturally arises: \emph{How to compute the reachable set of an NARMA model that is essentially expressed by neural networks and, based on that, how to verify properties of an NARMA model?} For computing or estimating the reachable set for a nonlinear system starting from an initial set and with an input set, the numbers of inputs and initial state that need to be checked are infinite, which is impossible only by performing experiments. Moreover, it has been observed that neural networks can react in unexpected and incorrect ways to even slight perturbations of their inputs \cite{szegedy2013intriguing}, which could result in unsafe systems. Hence, methods that are able to provide formal guarantees are in a great demand for verifying specifications or properties of systems involving neural networks.  Verifying neural networks is a hard problem, even simple properties about them have been proven NP-complete
problems \cite{katz2017reluplex}. The difficulties mainly come from the presence of activation functions and the complex structures, making neural networks large-scale, nonlinear, non-convex and thus incomprehensible to humans. Until now, only few results have been reported for verifying neural networks. The verification for feed-forward
multi-layer neural networks is investigated based on \emph{Satisfiability Modulo Theory} (SMT) in \cite{huang2016safety,pulina2012challenging}. In \cite{pulina2010abstraction} an abstraction-refinement approach is proposed for verification of
specific networks known as \emph{Multi-Layer Perceptrons} (MLPs). In \cite{xiang2017reachable_arxiv,katz2017reluplex}, a specific kind of activation functions called \emph{Rectified Linear Unit} (ReLU) is considered for the verification problem of neural networks. A simulation-based approach is developed in \cite{xiang2017output_arxiv}, which turns the reachable set estimation problem into a neural network maximal sensitivity computation problem that is described in terms of a chain of convex optimization problems. Additionally, some recent reachable set/state estimation results are reported for neural networks \cite{zhang2017state,zhang2016synchronization,xu2017reachable,zuo2014non,thuan2016reachable}, these results that are based on Lyapunov functions analogous to stability \cite {xiang2018Parameter,xiang2017robust,xiang2017stability,zhang2016mode,xiang2014stabilization,xiang2015equivalence,xiang2016necessary} and reachability analysis of dynamical systems \cite{xiang2017output,xiang2017reachable}, certainly have potentials to be further extended to safety verification.

In this paper, we will use neural networks to represent the forward dynamics of the nonlinear systems that are in the form of NARMA models. Due to the non-convex and nonlinearity existing in the model and inspired by some simulation-based ideas for verification problems \cite{duggirala2015c2e2,fan2016automatic,bak2017hylaa,bak2017rigorous},  a simulation-based approach will be developed to estimate the reachable set of state responses generated from a NARMA model. The core step of the approach is the reachable set estimation for a class of feed-forward neural networks called Multi-Layer Perceptron (MLP). By discretizing the input space of an MLP into a finite-number of regularized cells, a layer-by-layer computation process is developed to establish an over-approximation of the output set for each individual cell. The union of output set of all cells is the reachable set estimation for the MLP with respect to a given input set. On the basis of the reachable set estimation method for MLPs, the reachable set over any finite-time interval for an NARMA model can be estimated in a recursive manner. Safety verification can be performed if an estimation for the reachable set of an NARMA model is established, by checking the existence of intersections between the estimated reachable set and unsafe regions. 

The remainder of this paper is organized as follows. Neural network model of nonlinear systems, that is the NARMA model, is introduced in Section 2. The problem formulation is presented in Section 3. The main results, reachable set estimation for MLPs and NARMA models, are given in Sections 4 and 5, respectively. Conclusions are made in Section 6. 

\emph{Notations:}  $\mathbb{R}$ denotes the field of real numbers, $\mathbb{R}^n$  stands for the vector space of all $n$-tuples of real numbers, $\mathbb{R}^{n\times n}$  is the space of $n\times n$  matrices with real entries.  $\left\|  \mathbf{x}  \right\|_\infty$  stands for infinity norm for vector $\mathbf{x} \in \mathbb{R}^{n}$ defined as $\left\|  \mathbf{x}  \right\|_\infty = \max\nolimits_{i=1,\ldots,n}{\left|x_i\right|}$.  $\mathbf{A}^{\top}$ denotes the transpose of matrix $\mathbf{A}$. For a set $\mathcal{A}$, $\left|\mathcal{A}\right|$ denotes its cardinality.

\section{Neural Network Models of Nonlinear Dynamic Systems}

Neural networks are commonly used for data-driven modeling for nonlinear systems. 
One standard model to represent discrete-time nonlinear systems is the Nonlinear Autoregressive-Moving Average (NARMA) model. Given a discrete-time process with past
states $\mathbf{x}(k), \mathbf{x}(k - 1), \ldots, \mathbf{x}(k-d_x)$ and inputs $\mathbf{u}(k), \mathbf{u}(k- 1), \ldots, \mathbf{u}(k-d_u)$, an NARMA model is in the  form of 
\begin{equation}\label{NARMA}
\mathbf{x}(k+1)=f\left(\mathbf{x}(k),\mathbf{x}(k-1),\ldots,\mathbf{x}(k-d_x),\mathbf{u}(k),\mathbf{u}(k-1),\ldots, \mathbf{u}(k-d_u)\right), \end{equation}
where the nonlinear function $f(\cdot)$ needs to be approximated by training neural networks. The initial state of NARMA model (\ref{NARMA}) is $\mathbf{x}(0),\ldots,\mathbf{x}(d_x)$, which is assumed to be  in  set  $\mathcal{X}_0 \times\cdots\times\mathcal{X}_{d_x}$, 
and the input set is $\mathcal{U}$. We assume that the initial state $\{\mathbf{x}(0),\ldots,\mathbf{x}(d_x)\} \in \mathcal{X}_0 \times\cdots\times\mathcal{X}_{d_x}$ and input satisfies  $\mathbf{u}(k)\in \mathcal{U}$, $\forall k \in \mathbb{N}$. 


A neural network consists of a number of interconnected neurons. Each neuron is a simple processing element that responds to the weighted inputs it received from other neurons. In this paper, we consider the most popular and general feed-forward neural network, MLP. Generally, an MLP consists of three typical classes of layers: An input layer, that
serves to pass the input vector to the network, hidden layers of computation neurons, and
an output layer composed of at least a computation neuron to produce the output vector.

The action of a neuron depends on its activation function, which is described as 
\begin{align}
y_i = h\left(\sum\nolimits_{j=1}^{n}\omega_{ij} v_j + \theta_i\right),
\end{align}
where $v_j$ is the $j$th input of the $i$th neuron, $\omega_{ij}$ is the weight from the $j$th input to the $i$th neuron, $\theta_i$ is called the bias of the $i$th neuron, $y_i$ is the output of the $i$th neuron, $h(\cdot)$ is the activation function. The activation function is generally a nonlinear function  describing the reaction of $i$th neuron with inputs $v_j$, $j=1,\cdots,n$. Typical activation functions include Rectified Linear Unit (ReLU), logistic, tanh, exponential linear unit, linear functions, etc. In this work, our approach aims at dealing with  activation functions regardless of their specific forms, only the following monotonic assumption needs to be satisfied.

\begin{assumption}\label{assumption_1}
	For any $v_1 \le v_2$, the activation function satisfies $h(v_1) \le h(v_2)$. 
\end{assumption}
	Assumption \ref{assumption_1} is a common property that can be satisfied by a variety of activation functions. For example, it is easy to verify that the most commonly used such as logistic, tanh,  ReLU, all satisfy Assumption \ref{assumption_1}.

An MLP has multiple layers,  each layer $\ell$, $1 \le \ell \le L $, has $n^{[\ell]}$ neurons.  In particular, layer $\ell =0$ is used to denote the input layer and $n^{[0]}$ stands for the number of inputs in the rest of this paper, and of course, $n^{[L]}$ stands for the last layer, that is the output layer. For a neuron $i$, $1 \le i \le n^{[\ell]}$ in layer $\ell$, the corresponding input vector is denoted by $\mathbf{v}^{[\ell]}$ and the weight matrix is 
\begin{equation}
\mathbf{W}^{[\ell]} = \left[\boldsymbol{\omega}_{1}^{[\ell]},\ldots,\boldsymbol{\omega}_{n^{[\ell]}}^{[\ell]}\right]^{\top},
\end{equation}
where $\boldsymbol{\omega}_{i}^{[\ell]}$ is the weight vector. The bias vector for layer $\ell$ is
\begin{equation*} \boldsymbol{\theta}^{[\ell]}=\left[\theta_1^{[\ell]},\ldots,\theta_{n^{[\ell]}}^{[\ell]}\right]^{\top}
\end{equation*} 

The output vector of layer $\ell$ can be expressed as 
\begin{equation}
\mathbf{y}^{[\ell]}=h_{\ell}(\mathbf{W}^{[\ell]}\mathbf{v}^{[\ell]}+\boldsymbol{\theta}^{[\ell]}),
\end{equation} 
where $h_{\ell}(\cdot)$ is the activation function for layer $\ell$.

For an MLP, the output of $\ell-1$ layer is the input of $\ell$ layer, and the mapping from the input of input layer $\mathbf{v}^{[0]}$ to the output of output layer $\mathbf{y}^{[L]}$ stands for the input-output relation of the MLP, denoted by
\begin{equation}\label{NN}
\mathbf{y}^{[L]} = H (\mathbf{v}^{[0]}),
\end{equation}    
where $H(\cdot) \triangleq h_L  \circ h_{L - 1}  \circ  \cdots  \circ h_1(\cdot) $.

According to the \emph{Universal Approximation Theorem} \cite{hornik1989multilayer}, it guarantees that, in
principle, such an MLP in (\ref{NN}), namely the function $F(\cdot)$, is able to approximate any nonlinear real-valued function. To use MLP (\ref{NN}) to approximate NARMA model (\ref{NARMA}), we can let the input of (\ref{NN}) as
\begin{equation} \label{NN_v0}
\mathbf{v}^{[0]} = [\mathbf{x}^{\top}(k),\mathbf{x}^{\top}(k-1),\ldots,\mathbf{x}^{\top}(k-d_x),\mathbf{u}^{\top}(k),\mathbf{u}^{\top}(k-1),\ldots,\mathbf{u}^{\top}(k-d_u)]^{\top},
\end{equation} 
and output as
\begin{equation}  \label{NN_yL}
\mathbf{y}^{[L]} = \mathbf{x}(k+1).
\end{equation}

With the input and output data of original nonlinear systems, an approximation of NARMA model (\ref{NARMA}) can be obtained by standard feed-forward neural network training process. Despite the impressive ability of approximating nonlinear functions,  much complexities represent in predicting the output behaviors of MLP  (\ref{NN}) as well as NARMA model (\ref{NARMA}) because of the nonlinearity and non-convexity of MLPs. In the most of real applications, an MLP is usually viewed as a \emph{black box} to generate  a desirable output with respect to a given input. However, regarding  property verifications such as the safety verification, it has been observed that even a well-trained neural network can react in unexpected and incorrect ways to even slight perturbations of their inputs, which could result in unsafe systems. Thus,  to validate the neural network NARMA model for a nonlinear dynamics, it is necessary to compute the reachable set estimation of the model, which is able to cover all possible values of output, to assure that the state trajectories of the model will not attain unreasonable values that is inadmissible for the original system. It is also necessary to estimate all possible values of state for safety verification of a neural network NARMA model. 

\section{Problem Formulation}

Consider initial set $\mathcal{X}_0 \times\cdots\times\mathcal{X}_{d_x}$ and input set $\mathcal{U}$, the reachable set of NARMA model in the form of (\ref{NARMA}) is defined as follows.
\begin{definition} \label{reachable_set}
	Given an NARMA model in the form of (\ref{NARMA}) with initial set  $\mathcal{X}_0 \times\cdots\times\mathcal{X}_{d_x}$ and input set $\mathcal{U}$, the reachable set at a time instant $k$ is:	\begin{align}
	\mathcal{X}_k \triangleq \{\mathbf{x}(k)  \mid \mathbf{x}(k)~\mathrm{satisfies}~ (\ref{NARMA})~\mathrm{and}~\{\mathbf{x}(0),\cdots,\mathbf{x}(d_x)\} \in \mathcal{X}_0 \times\ldots\times\mathcal{X}_{d_x}, \nonumber
	\\
	\mathbf{u}(k) \in \mathcal{U},~\forall k \in \mathbb{N}\}, \label{def_1}
	\end{align}
	and the reachable set over time interval $[0,k_f]$ is defined by
	\begin{equation}\label{def_2}
	\mathcal{X}_{[0,k_f]} = \bigcup\nolimits_{s=0}^{k_f}\mathcal{X}_{s}.
	\end{equation} 
\end{definition}

Since MLPs are often large, nonlinear, and non-convex, it is extremely difficult to compute the exact reachable set $\mathcal{X}_k$ and $\mathcal{X}_{[0,k_f]}$ for an NARMA model with MLPs. Rather than directly computing the exact output reachable set for an  NARMA model, a more practical and feasible way is to derive an over-approximation of $\mathcal{X}_k$, which is called  reachable set estimation.

\begin{definition}\label{reachable_estimation}
	A set  $\tilde{\mathcal{X}_k} $ is called a  reachable set estimation of NARMA model (\ref{NARMA})  at time instant $k$, if $\mathcal{X}_k \subseteq \tilde{\mathcal{X}_k}$ holds and, moreover, $\tilde{\mathcal{X}}_{[0,k_f]} = \bigcup\nolimits_{s=0}^{k}\tilde{\mathcal{X}}_{s}$ is a reachable set estimation for NARMA model (\ref{NARMA})  over time interval $[0,k_f]$. 
\end{definition}

Based on Definition \ref{reachable_estimation}, the problem of reachable set estimation for an NARMA model is given as below.

\begin{problem} \label{problem1}
	How does one find the set $\tilde{\mathcal{X}}_k $ such that $\mathcal{X}_k \subseteq \tilde{\mathcal{X}_k}$, given a bounded initial set  $\mathcal{X}_0 \times\ldots\times\mathcal{X}_{d_x}$ and an input set $\mathcal{U}$ and an NARMA model (\ref{NARMA})? 
\end{problem}

In this work, we will focus on the safety verification for NARMA models. The safety specification for output is expressed by a set defined in the state space, describing the safety requirement. 

\begin{definition}
	Safety specification $\mathcal{S}$ 
	formalizes the safety requirements for  state $\mathbf{x}(k)$ of  NARMA model (\ref{NARMA}), and is a predicate over  state $\mathbf{x}$ of NARMA model (\ref{NARMA}). The NARMA model (\ref{NARMA}) is safe over time interval $[0,k_f]$ if and only if the following condition is satisfied:
	\begin{equation}\label{safety}
	\mathcal{X}_{[0,k_f]} \cap \neg \mathcal{S} = \emptyset,
	\end{equation}
	where $\neg $ is the symbol for logical negation. 
\end{definition}

Therefore, the safety verification problem for NARMA models is stated as follows.
\begin{problem}\label{problem2}
	How can the safety requirement in (\ref{safety}) be verified given an NARMA model (\ref{NARMA}) with a bounded initial set  $\mathcal{X}_0 \times\ldots\times\mathcal{X}_{d_x}$ and an input set $\mathcal{U}$ and a safety specification $\mathcal{S}$? 
\end{problem}

Before ending this section, a lemma is presented to show that the safety verification of an MLP can be relaxed by checking with the over-approximation of output reachable set. 

\begin{lemma}\label{lemma1}
	Consider an NARMA model (\ref{NARMA}) and a safety specification $\mathcal{S}$, the NARMA model is safe in time interval $[0,k_f]$ if the following condition is satisfied
	\begin{equation}\label{estimate_safety}
	\tilde{\mathcal{X}}_{[0,k_f]} \cap \neg \mathcal{S} = \emptyset,
	\end{equation}
	where $\mathcal{X}_{[0,k_f]}\subseteq \tilde{\mathcal{X}}_{[0,k_f]}$.
\end{lemma}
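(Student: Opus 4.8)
The plan is to reduce the claim directly to the definition of safety together with the monotonicity of set intersection under inclusion. First I would recall from the safety definition that the NARMA model (\ref{NARMA}) is safe over the interval $[0,k_f]$ precisely when $\mathcal{X}_{[0,k_f]} \cap \neg\mathcal{S} = \emptyset$, as stated in (\ref{safety}). Thus the entire burden of the lemma is to show that the hypothesis (\ref{estimate_safety}), which is phrased in terms of the over-approximation $\tilde{\mathcal{X}}_{[0,k_f]}$, forces this same emptiness condition on the true reachable set $\mathcal{X}_{[0,k_f]}$.

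Second, I would invoke the only structural fact that is available, namely the inclusion $\mathcal{X}_{[0,k_f]} \subseteq \tilde{\mathcal{X}}_{[0,k_f]}$, which holds by Definition \ref{reachable_estimation}. Intersecting both sides with the unsafe set $\neg\mathcal{S}$ and using the elementary monotonicity of intersection, namely that $A \subseteq B$ implies $A \cap C \subseteq B \cap C$ for any set $C$, yields the chain
\begin{equation*}
\mathcal{X}_{[0,k_f]} \cap \neg\mathcal{S} \subseteq \tilde{\mathcal{X}}_{[0,k_f]} \cap \neg\mathcal{S} = \emptyset,
\end{equation*}
where the final equality is exactly the hypothesis (\ref{estimate_safety}). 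Since the only subset of the empty set is the empty set itself, I conclude that $\mathcal{X}_{[0,k_f]} \cap \neg\mathcal{S} = \emptyset$, which is precisely the safety condition (\ref{safety}), and the argument is complete.

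I would emphasize that there is essentially no obstacle here: the result is a one-line consequence of set-inclusion monotonicity once the over-approximation property of $\tilde{\mathcal{X}}_{[0,k_f]}$ is granted by Definition \ref{reachable_estimation}. The genuine content of the lemma is therefore not the proof itself but its role as a soundness bridge. It certifies that the genuinely difficult work carried out later, namely constructing a sufficiently tight over-approximation $\tilde{\mathcal{X}}_{[0,k_f]}$ in Sections 4 and 5, is by itself enough for sound safety verification, so that one never needs to compute the exact and generally intractable reachable set $\mathcal{X}_{[0,k_f]}$.
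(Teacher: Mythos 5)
Your proof is correct and follows essentially the same route as the paper's: both rest on the inclusion $\mathcal{X}_{[0,k_f]} \subseteq \tilde{\mathcal{X}}_{[0,k_f]}$ forcing $\mathcal{X}_{[0,k_f]} \cap \neg\mathcal{S} = \emptyset$ from the hypothesis (\ref{estimate_safety}). You merely make explicit the monotonicity-of-intersection step that the paper's one-line proof leaves implicit, which is a fine elaboration but not a different argument.
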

\begin{proof}
	Since $\mathcal{X}_{[0,k_f]}\subseteq \tilde{\mathcal{X}}_{[0,k_f]}$, condition (\ref{estimate_safety}) directly leads to $\mathcal{X}_{[0,k_f]} \cap \neg \mathcal{S} = \emptyset$. The proof is complete.
\end{proof}

Lemma \ref{lemma1} implies that it is sufficient to use the estimated reachable set for the safety verification of  an NARMA model, thus the solution of Problem \ref{problem1} is also the key to solve Problem \ref{problem2}. 

\section{Reachable Set Estimation for MLPs}
As (\ref{NN})--(\ref{NN_yL}) in previous section, the state of an NARMA model is computed through an MLP recursively. Therefore, the first step for the reachable set estimation for an NARMA model is to estimate the output set of MLP (\ref{NN}). 

Given an MLP $\mathbf{y}^{[L]} = H(\mathbf{v}^{[0]})$ with a bounded input set $\mathcal{V}$, the problem is how to compute a set $\mathcal{Y}$ as below:
\begin{equation}
\mathcal{Y} \triangleq \{\mathbf{y}^{[L]} \mid \mathbf{y}^{[L]} = H(\mathbf{v}^{[0]}),~\mathbf{v}^{[0]} \in \mathcal{V} \subset \mathbb{R}^{n}\}.
\end{equation}

Due to the complex structure and nonlinearities in activation functions, the estimation of output reachable set of MLP represents much difficulties if only using analytical methods.
One possible way to circumvent those difficulties is to employ the information produced by a finite number of simulations.

\begin{definition}
	Given a set $\mathcal{V} \subset \mathbb{R}^{n}$, a finite collection of sets $\mathscr{P} = \{\mathcal{P}_1,\mathcal{P}_2,\ldots,\mathcal{P}_N\}$ is said to be a partition of $\mathcal{V}$ if (1) $\mathcal{P}_i \subseteq \mathcal{V}$; (2) $\mathrm{int}(\mathcal{P}_i) \cup \mathrm{int}(\mathcal{P}_j) = \emptyset$; (3) $\mathcal{V} \subseteq \bigcup\nolimits_{i=1}^{N}\mathcal{P}_i$, $\forall i \in \{1,\ldots,N\}$. Each elements $\mathcal{P}_i$ of partition $\mathscr{P}$ is called a cell.
\end{definition}

In this paper, we use cells defined by intervals which are given as follows: For any bounded set $\mathcal{V} \subset \mathbb{R}^{n}$, we have $\mathcal{V} \subseteq \bar{\mathcal{V}}$,where $\bar{\mathcal{V}} = \{\mathbf{v} \in \mathbb{R}^{n} \mid \underline{\mathbf{v}} \le \mathbf{v} \le \bar{\mathbf{v}}\}$, in which $\underline{\mathbf{v}}$ and $\bar{\mathbf{v}}$ are defined as the lower and upper bounds of elements of $\mathbf{v}$ in $\mathcal{V}$ as  $\underline{\mathbf{v}}=[\mathrm{inf}_{\mathbf{v}\in \mathcal{V}}(v_1),\ldots,\mathrm{inf}_{\mathbf{v}\in \mathcal{V}}(v_n)]^{\top}$ and $\bar{\mathbf{v}}=[\mathrm{sup}_{\mathbf{v}\in \mathcal{V}}(v_1),\ldots,\mathrm{sup}_{\mathbf{v}\in \mathcal{V}}(v_n)]^{\top}$, respectively. Then, we are able to partition interval $\mathcal{I}_i=[\mathrm{inf}_{\mathbf{v}\in \mathcal{V}}(v_i),~\mathrm{sup}_{\mathbf{v}\in \mathcal{V}}(v_i)]$, $i \in \{1,\ldots,n\}$ into $M_i$ segments as $\mathcal{I}_{i,1}=[v_{i,0},v_{i,1}]$, $\mathcal{I}_{i,2}=[v_{i,1},v_{i,2}]$, $\ldots$, $\mathcal{I}_{i,M_i}=[v_{i,M_i-1},v_{i,M_i}]$, where $v_{i,0}=\mathrm{inf}_{\mathbf{v}\in \mathcal{V}}(v_i)$, $v_{i,M_i}=\mathrm{sup}_{\mathbf{v}\in \mathcal{V}}(v_i)$ and $v_{i,n} = v_{i,0} + \frac{m(v_{i,M_i}-v_{i,0})}{M_i}$, $m \in \{0,1,\ldots,M_i\}$. The cells then can be constructed as $\mathcal{P}_i = \mathcal{I}_{1,m_1} \times\cdots\times \mathcal{I}_{n,m_n}$, $i \in \{1,2,\ldots,\prod\nolimits_{s=1}^{n}M_s\}$, $\{m_1,\ldots,m_n\}\in \{1,\ldots,M_1\} \times \cdots \times \{1,\ldots,M_n\}$. To remove redundant cells, we have to check if the cell has empty intersection with $\mathcal{V}$. Cell $\mathcal{P}_i$ should be removed if $\mathcal{P}_i \cap \mathcal{V} = \emptyset$.  The cell construction process is summarized by \texttt{cell} function in Algorithm \ref{alg1}.

\begin{algorithm}\caption{Partition an input set} \label{alg1}
\begin{algorithmic}[1]
	\Require Set $\mathcal{V}$, partition numbers $M_i$, $i \in \{ 1,\ldots,n\}$
	\Ensure Partition $\mathscr{P} = \{\mathcal{P}_1,\mathcal{P}_2,\ldots,\mathcal{P}_N\}$
	
	\Function{cell}{$\mathcal{V}$, $M_i$, $i \in \{ 1,\ldots,n\}$}
	
	\State $v_{i,0} \gets \mathrm{inf}_{\mathbf{v}\in \mathcal{V}}(v_i)$, $v_{i,M_i} \gets \mathrm{sup}_{\mathbf{v}\in \mathcal{V}}(v_i)$
	
	\For{$i = 1:1:n$}
	\For{$j = 1:1:M_i$}
	\State $v_{i,j} \gets v_{i,0} + \frac{j(v_{i,M_i}-v_{i,0})}{M_i}$
	\State $\mathcal{I}_{i,j} \gets [v_{i,j-1},v_{i,j}]$
	\EndFor
	\EndFor
	
	\State $\mathcal{P}_i \gets \mathcal{I}_{1,m_1} \times\cdots\times \mathcal{I}_{n,m_n}$, $\{m_1,\ldots,m_n\}\in \{1,\ldots,M_1\} \times \cdots \times \{1,\ldots,M_n\}$
	\If{$\mathcal{P}_i \cap \mathcal{V} = \emptyset$}
	\State Remove $\mathcal{P}_i$
	\EndIf
	
	\State\Return $\mathscr{P} = \{\mathcal{P}_1,\mathcal{P}_2,\ldots,\mathcal{P}_N\}$
	
	\EndFunction
\end{algorithmic}
\end{algorithm}

With the cells constructed by \texttt{cell} function, the next step is to develop a function that is able to estimate the output reachable set for each individual cell as the input to the MLP. A layer-by-layer approach is developed.

\begin{theorem}\label{thm1}
	For a single layer $\mathbf{y}=h(\mathbf{W}\mathbf{v}+\boldsymbol{\theta})$, if the input set is a cell described by  $\mathcal{I}_{1} \times\cdots\times \mathcal{I}_{n_v}$ where $\mathcal{I}_{i} = [\underline{v}_i,\bar v_i]$, $i \in \{1,\ldots,n_v\}$, the output set can be over-approximated by a cell in the expression of intervals $\mathcal{I}_{1} \times\cdots\times \mathcal{I}_{n_y}$, where $\mathcal{I}_{i}$, $i \in \{1,\ldots,n_y\}$ can be computed by 
	\begin{align}\label{thm1_1}
	\mathcal{I}_{i} = 	[h (\underline{z}_i + \theta_i) ,~h (\bar{z}_i +  \theta_i) ],
	\end{align}
	where $\underline{z_i} = \sum\nolimits_{j=1}^{n_v}\underline{g}_{ij}$, $\bar{z_i} = \sum\nolimits_{j=1}^{n_v}\bar{g}_{ij}$ with $\underline{g}_{ij}$ and $\bar{g}_{ij}$ defined by
	\begin{equation}\label{thm1_2}
\underline{g}_{ij}  = \left\{ {\begin{array}{*{20}c}
	{\omega _{ij} \underline{v}_j } & {\omega _{ij}  \geq 0}  \\
	{\omega _{ij} \bar v_j } & {\omega _{ij}  < 0}  \\
	\end{array} } \right.,~\bar g_{ij}  = \left\{ {\begin{array}{*{20}c}
	{\omega _{ij} \bar v_j } & {\omega _{ij}  \geq 0}  \\
	{\omega _{ij} \underline{v}_j } & {\omega _{ij}  < 0}  \\
	\end{array} } \right..
	\end{equation}
\end{theorem}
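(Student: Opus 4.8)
The plan is to argue componentwise: the claimed output box is a Cartesian product of intervals indexed by the output coordinate $i$, so it suffices to show that for each $i \in \{1,\ldots,n_y\}$ the $i$-th coordinate of every admissible output lies in $\mathcal{I}_i = [h(\underline z_i + \theta_i),\, h(\bar z_i + \theta_i)]$. Fixing such an $i$, I would write the scalar output as $y_i = h\left(\sum_{j=1}^{n_v}\omega_{ij} v_j + \theta_i\right)$ and reduce the problem to two elementary tasks: bounding the pre-activation affine argument $\sum_j \omega_{ij} v_j + \theta_i$ over the input cell, and then transporting those bounds through $h$ using the monotonicity Assumption~\ref{assumption_1}.

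For the first task I would bound each summand $\omega_{ij} v_j$ separately as $v_j$ ranges over $\mathcal{I}_j = [\underline v_j, \bar v_j]$. A case split on the sign of the weight gives exactly the definition (\ref{thm1_2}): when $\omega_{ij} \ge 0$ the map $v_j \mapsto \omega_{ij} v_j$ is nondecreasing, so its extremes are $\omega_{ij}\underline v_j$ and $\omega_{ij}\bar v_j$, while for $\omega_{ij} < 0$ it is nonincreasing and the extremes swap; in both cases one checks the uniform bound
\begin{equation*}
\underline g_{ij} \le \omega_{ij} v_j \le \bar g_{ij}, \qquad \forall v_j \in [\underline v_j, \bar v_j].
\end{equation*}
Summing these $n_v$ inequalities over $j$ and recalling $\underline z_i = \sum_j \underline g_{ij}$ and $\bar z_i = \sum_j \bar g_{ij}$ yields $\underline z_i \le \sum_j \omega_{ij} v_j \le \bar z_i$, hence $\underline z_i + \theta_i \le \sum_j \omega_{ij} v_j + \theta_i \le \bar z_i + \theta_i$ for every admissible $\mathbf{v}$ in the cell.

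The second task is then immediate: applying Assumption~\ref{assumption_1} (monotonicity of $h$) to this chain of inequalities gives $h(\underline z_i + \theta_i) \le y_i \le h(\bar z_i + \theta_i)$, i.e. $y_i \in \mathcal{I}_i$. Taking the Cartesian product over all output coordinates $i$ shows every admissible $\mathbf{y}$ lies in $\mathcal{I}_1 \times \cdots \times \mathcal{I}_{n_y}$, which is the desired over-approximation. I do not expect a genuine obstacle here — the argument is interval arithmetic for the affine map followed by a monotone push-forward, and notably only monotonicity (not continuity or convexity) of $h$ is needed. The one point deserving care is the sign-dependent selection of endpoints in (\ref{thm1_2}), together with the conceptual remark that the result is only an over-approximation rather than the exact image: bounding the coordinates independently decouples them, so the extreme value in each output coordinate may be attained at a different input point, and the product box can therefore be strictly larger than the true set $\{\mathbf{y} : \mathbf{y} = h(\mathbf{W}\mathbf{v} + \boldsymbol{\theta}),\ \mathbf{v} \in \mathcal{I}_1 \times \cdots \times \mathcal{I}_{n_v}\}$.
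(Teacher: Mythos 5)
Your proposal is correct and follows essentially the same route as the paper's own proof: bound each summand $\omega_{ij}v_j$ via the sign-dependent endpoints in (\ref{thm1_2}) to trap the pre-activation value in $[\underline{z}_i,\bar{z}_i]$, then push these bounds through $h$ using the monotonicity of Assumption~\ref{assumption_1} and take the Cartesian product over output coordinates. Your added remark that the product box may strictly over-approximate the true image (since per-coordinate extrema may occur at different input points) is a correct observation that the paper leaves implicit.
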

\begin{proof}	
	By (\ref{thm1_2}), one can obtain that 
	\begin{align}
	\underline{z}_i = \min_{\mathbf{v} \in \mathcal{I}_{1} \times\cdots\times \mathcal{I}_{n_v}}\left(\sum\nolimits_{j=1}^{n_v}\omega_{ij} v_j\right),
	\\
	\bar{z}_i = \max_{\mathbf{v} \in \mathcal{I}_{1} \times\cdots\times \mathcal{I}_{n_v}}\left(\sum\nolimits_{j=1}^{n_v}\omega_{ij} v_j\right).
	\end{align}

Consider neuron $i$, its output is $y_i =  h\left(\sum\nolimits_{j=1}^{n_v}\omega_{ij}v_j + \theta_i\right)$. Under Assumption \ref{assumption_1}, we can conclude that 
		\begin{align}
		\min_{\mathbf{v} \in \mathcal{I}_{1} \times\cdots\times \mathcal{I}_{n_v}}\left(h\left(\sum\nolimits_{j=1}^{n_v}\omega_{ij}v_j + \theta_i\right)\right) = h (\underline{z}_i + \theta_i),
		\\
		\max_{\mathbf{v} \in \mathcal{I}_{1} \times\cdots\times \mathcal{I}_{n_v}}\left(h\left(\sum\nolimits_{j=1}^{n_v}\omega_{ij}v_j + \theta_i\right)\right) = h (\bar{z}_i + \theta_i).
		\end{align}
		
Thus, it leads to
	\begin{align}
	y_i \in [h(\underline{z}_i + \theta_i) ,~h (\bar{z}_i +  \theta_i) ] = \mathcal{I}_i.
	\end{align}
	and therefore, $\mathbf{y} \in \mathcal{I}_{1} \times\cdots\times \mathcal{I}_{n_y}$. The proof is complete.
\end{proof}

Theorem \ref{thm1} not only demonstrates the output set of one single layer can be approximated by a cell if the input set is a cell, it also gives out an efficient way to calculate the cell, namely by (\ref{thm1_1}) and (\ref{thm1_2}). For multi-layer neural networks, Theorem \ref{thm1}  plays the key role for the layer-by-layer approach. For an MLP which essentially has $\mathbf{v}^{[\ell]}=\mathbf{y}^{[\ell-1]}$, $\ell=1,\ldots,L$, if the input set is a set of cells, Theorem \ref{thm1} assures the input set of every layer can be over-approximated by a set of cells, which can be computed by  (\ref{thm1_1}) and (\ref{thm1_2}) layer-by-layer. The output set of layer $L$ is thus an over-approximation of reachable set of the MLP. 

Function \texttt{reachMLP} given in Algorithm \ref{alg2} illustrates the layer-by-layer method for reachable set estimation for an MLP.  

\begin{algorithm}
	\caption{Reachable set estimation for MLP} \label{alg2}
	\begin{algorithmic}[1]
		\Require Weight matrices $\mathbf{W}^{[\ell]}$, bias $\boldsymbol{\theta}^{[\ell]}$, $\ell \in \{ 1,\ldots,L\}$, set $\mathcal{V}$, partition numbers $M_i$, $i \in \{ 1,\ldots,n\}$
		\Ensure Reachable set estimation $\tilde{\mathcal{Y}}$.
		
		\Function{reachMLP}{$\mathbf{W}^{[\ell]}$, $\boldsymbol{\theta}^{[\ell]}$, $\ell \in \{ 1,\ldots,L\}$, $\mathcal{V}$, $M_i$, $i \in \{ 1,\ldots,n\}$}
		
		\State $\mathscr{P} \gets \mathrm{cell}(\mathcal{V}, M_i, i \in \{ 1,\ldots,n\})$
		
		\For{$p=1:1:\left|\mathscr{P}\right|$}
		\State $\mathcal{I}_1^{[1]}\times\cdots\times\mathcal{I}_{n^{[1]}}^{[1]} \gets \mathcal{P}_p$
		\For{$j=1:1:L$}
		\For{$i=1:1:n^{[j]}$}
		\State $\underline{g}_{ij}  \gets \left\{ {\begin{array}{*{20}c}
			{\omega _{ij} \underline{v}_j } & {\omega _{ij}  \geq 0}  \\
			{\omega _{ij} \bar v_j } & {\omega _{ij}  < 0}  \\
			\end{array} } \right.,~\bar g_{ij}  \gets \left\{ {\begin{array}{*{20}c}
			{\omega _{ij} \bar v_j } & {\omega _{ij}  \geq 0}  \\
			{\omega _{ij} \underline{v}_j } & {\omega _{ij}  < 0}  \\
			\end{array} } \right.$
	   \State $\underline{z_i} \gets \sum\nolimits_{j=1}^{n_v}\underline{g}_{ij}$, $\bar{z_i} \gets \sum\nolimits_{j=1}^{n_v}\bar{g}_{ij}$
	   \State 	$\mathcal{I}_{i}^{[j+1]} \gets 	[h_j (\underline{z}_i + \theta_i) ,~h_j (\bar{z}_i +  \theta_i) ]$		
		\EndFor		
		\EndFor
		\State $\tilde{\mathcal{Y}}_p \gets \mathcal{I}_1^{[L]}\times\cdots\times\mathcal{I}_{n^{[L]}}^{[L]}$
		\EndFor
		\State $\tilde{\mathcal{Y}} \gets \bigcup\nolimits_{p=1}^{\left|\mathscr{P}\right|}\tilde{\mathcal{Y}}_p$
		\State \Return $\tilde{\mathcal{Y}}$
		\EndFunction
	\end{algorithmic}
\end{algorithm}

\begin{example}\label{example_1}
	An MLP with 2 inputs, 2 outputs and 1 hidden layer consisting of 5 neurons is considered. 
	The activation function for the hidden layer is choosen as \texttt{tanh} function and \texttt{purelin} function is for the output layer.	
	The weight matrices and bias vectors are given as below: 
	\begin{align*}
	&\mathbf{W}^{[1]}=\left[ {\begin{array}{*{20}c}
		0.2075 &  -0.7128 \\
		0.2569 &    0.7357  \\
		-0.6136 &    -0.3624  \\
		0.0111  &   0.1393   \\
		-1.0872  &  -0.2872  \\
		\end{array} } \right],~\boldsymbol{\theta}^{[1]}=\left[ {\begin{array}{*{20}c}
		-1.1829 \\
		-0.6458 \\
		0.4619 \\
		-0.0499 \\
		0.3405\\
		\end{array} } \right],
	\\
	&\mathbf{W}^{[2]}=\left[ {\begin{array}{*{20}c}
		-0.5618 &  -0.0851 &  -0.4529 &  -0.8230   & 0.5651 \\
		0.7861 &  -0.0855 &  1.1041 &   1.6385 &  -0.3859
		\\
		\end{array} } \right],~
	\boldsymbol{\theta}^{[2]}=\left[ {\begin{array}{*{20}c}
		-0.2489 \\
		-0.1480 \\
		\end{array} } \right].
	\end{align*}  
	
	In this example, the input set is considered as below:
	\begin{align*}
	\mathcal{V}=\{\mathbf{v} \in \mathbb{R}^{2} \mid \left\|\mathbf{v}\right\|_{\infty} \le 1\}.
	\end{align*}
	
	Then, the partition numbers are chosen to be $M_1 =M_2 =20$, which means there are in total $400$ cells, $\mathcal{P}_i$, $i \in \{1,\ldots,400\}$, produced for the reachable set estimation.

	Executing function \texttt{reachMLP} for input set $\mathcal{V}$, the estimated output reachable set is given in Figure \ref{reach_1}, in which it can be seen that 400 reachtubes are obtained and the union of them is the over-approximation of reachable set. 
	
	\begin{figure}
		\begin{center}
			\includegraphics[width=12cm]{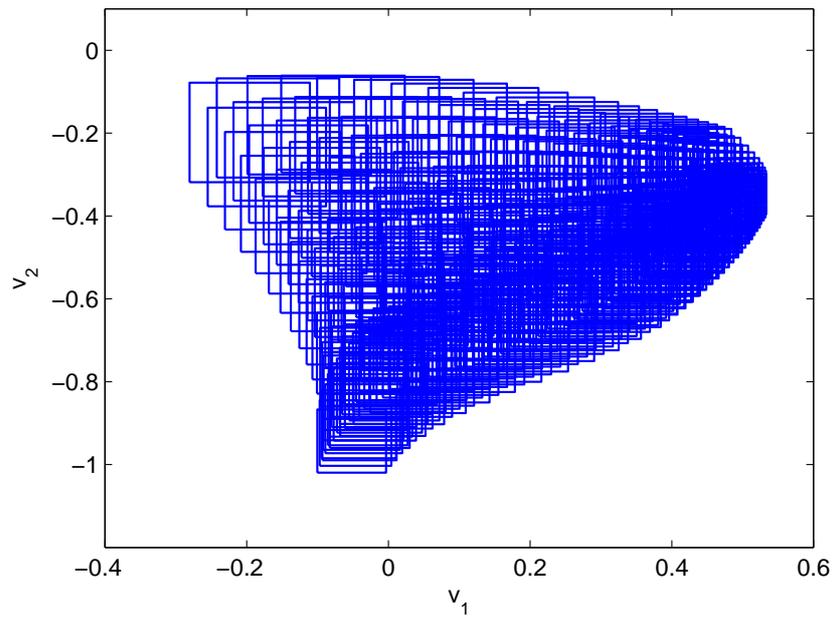}
			\caption{Output reachable set estimation with input set $\mathcal{V}=\{\mathbf{v} \in \mathbb{R}^{2} \mid \left\|v\right\|_{\infty} \le 1\}$ and partition number $M_1 =M_2 =20$. 400 reachtubes are computed for the reachable set estimation of the MLP. }
			\label{reach_1}
		\end{center}
	\end{figure}
	Moreover, we choose a different partition numbers discretizing state space to show how the choice of partitioning input set affects the estimation outcome. Explicitly,  larger partition numbers will produce more cells and generate preciser approximations of input sets and are supposed to achieve preciser estimations. Here, we adjust the partition numbers from $10$ to $50$ for the different estimation results. With this finer discretization, more computation efforts are required for running function \texttt{reachMLP}, but a tighter estimation for the  reachable set can be obtained. The  reachable set estimations are shown in Figure \ref{reach_2}. Comparing those results, it can be observed that larger partition numbers can lead to a better estimation result at the expense of more computation efforts. The computation time and  number of reachtubes  with different partition numbers are listed in Table \ref{tab_1}. 
	
		\begin{figure}
			\begin{center}
				\includegraphics[width=12cm]{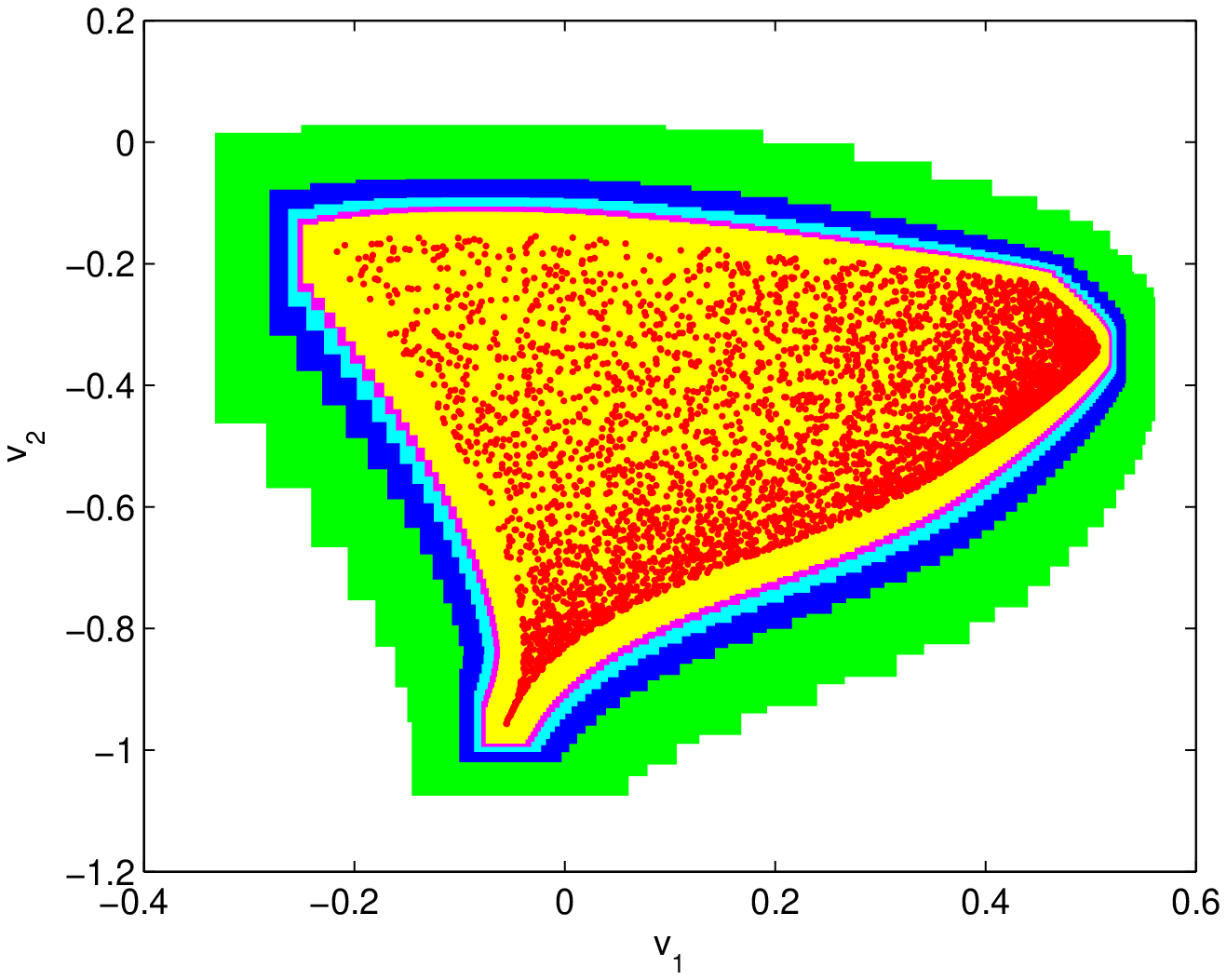}
				\caption{Output reachable set estimation with input set $\mathcal{V}=\{\mathbf{v} \in \mathbb{R}^{2} \mid \left\|v\right\|_{\infty} \le 1\}$ and partition number $M_1 =M_2 =10$ (green + blue + cyan + magenta + yellow), $M_1 =M_2 =20$ (blue + cyan + magenta + yellow), $M_1 =M_2 =30$ (cyan + magenta + yellow), $M_1 =M_2 =40$ (magenta + yellow) and $M_1 =M_2 =50$ (yellow). It can be observed that tighter estimations can be obtained with larger partition numbers. $5000$ random outputs (red spots) from input set  are all located in the estimated reachable set.}
				\label{reach_2}
			\end{center}
		\end{figure}

	\begin{table}[ht!]
		\centering
		\caption{Computation time and number of reachtubes  with different partition numbers}\label{tab_1}
		\begin{tabular}{c||c||c}
			\hline
	    	Partition Number & Computation Time  & Number of Reachtubes\\ 
			\hline
			$M_1=M_2=10$  & 0.062304 seconds & 100\\
			\hline
			 $M_1=M_2=20$ &  0.074726 seconds & 400\\
			 \hline
			  $M_1=M_2=30$&  0.142574 seconds& 900\\
			  \hline
     			 $M_1=M_2=40$ &   0.251087 seconds & 1600\\
			 \hline
			 $M_1=M_2=50$ &  0.382729 seconds & 2500\\
			\hline
			
		\end{tabular}
	\end{table} 
	
To validate the result, $5000$ random outputs are generated, it is clear to see in Figure \ref{reach_2} that all the outputs are included in the estimated reachable set, showing the effectiveness of the proposed approach. 
\end{example}

\section{Reachable Set Estimation for NARMA Models}
Based on the developed approach for reachable set estimation for MLP, this section will extend the result to NARMA models. As in previous sections, NARMA models employ MLP to approximate the nonlinear relation between $\mathbf{x}(k),\mathbf{x}(k-1),\ldots,\mathbf{x}(k-d_x),\mathbf{u}(k),\mathbf{u}(k-1),\ldots,\mathbf{u}(k-d_u)$ and state $\mathbf{x}(k+1)$. Without loss of generality, we assume $d_x \ge d_u$, thus the model is valid for any $k \ge d_x$. Thus, with the aid of reachable set estimation results for MLP, the reachable set of NARMA (\ref{NARMA}) at time instant $k$ can be estimated by recursively using functions \texttt{cell} and \texttt{reachMLP} for $k-d_x$ times.  

Since the reachable sets $\mathcal{X}_k$, $k \in \{0,1,\ldots,d_x\}$, are given as initial set, let us start with $k = d_x+1$. In the employment of function \texttt{reachMLP} with input of $\mathcal{X}_0 \times \ldots \times \mathcal{X}_{d_x}$ and $\mathcal{U}^{d_u}$, $	\tilde{\mathcal{X}}_{d_x+1} = \mathrm{reachMLP}( \mathbf{W}^{[\ell]}, \boldsymbol{\theta}^{[\ell]}, \ell \in \{ 1,\ldots,L\},\mathcal{X}_0 \times \ldots \times \mathcal{X}_{d_x}, M_i, i \in \{ 1,\ldots,n^{[0]}\})$ is an over-approximation of $\mathcal{X}_{d_x+1}$, namely $\mathcal{X}_{d_x+1} \subseteq \tilde{\mathcal{X}}_{d_x+1}$. Then, repeating using function \texttt{reachMLP} from $d_x+1$ to $k_f$, we can obtain an over-approximation of $\mathcal{X}_k$, $k = d_x+1,\ldots,k_f$, and $\mathcal{X}_{[0,k_f]}$.

\begin{proposition}
	Consider NARMA model (\ref{NARMA}) with initial set $\mathcal{X}_0 \times \ldots \times \mathcal{X}_{d_x}$ and input set $\mathcal{U}$, the reachable set $\mathcal{X}_k$, $k > d_x$ can be recursively over-approximated by 
	\begin{align}
	\tilde{\mathcal{X}}_k = &\mathrm{reachMLP}( \mathbf{W}^{[\ell]}, \boldsymbol{\theta}^{[\ell]}, \ell \in \{ 1,\ldots,L\}, \nonumber
	\\
	& \tilde{\mathcal{X}}_{k-d_x-1} \times\ldots \times \tilde{\mathcal{X}}_{k-1} \times\mathcal{U}^{d_u}, M_i, i \in \{ 1,\ldots,n^{[0]}\}),
	\end{align}
	where $\tilde{\mathcal{X}}_k = \mathcal{X}_k$, $k \in \{0,\ldots,d_x\}$. 
	the reachable set over time interval $[0, k_f]$ can be estimated by
	\begin{equation}\label{}
	\tilde{\mathcal{X}}_{[0,k_f]} = \bigcup\nolimits_{s=0}^{k_f}\tilde{\mathcal{X}}_{s}.
	\end{equation} 	
\end{proposition}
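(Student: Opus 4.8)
The plan is to prove the containment $\mathcal{X}_k \subseteq \tilde{\mathcal{X}}_k$ for every $k$ by induction on the time index, using the correctness of function \texttt{reachMLP} — established through Theorem \ref{thm1} and the layer-by-layer argument following it — as the single essential ingredient. Recall that this correctness says precisely that, for any bounded input set $\mathcal{V}$, the set $\mathrm{reachMLP}(\ldots,\mathcal{V},\ldots)$ over-approximates $\{H(\mathbf{v}) \mid \mathbf{v}\in\mathcal{V}\}$. Once the per-instant containment is in hand, the interval estimate is immediate, since $\mathcal{X}_{[0,k_f]} = \bigcup_{s=0}^{k_f}\mathcal{X}_s \subseteq \bigcup_{s=0}^{k_f}\tilde{\mathcal{X}}_s = \tilde{\mathcal{X}}_{[0,k_f]}$.

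The base case covers $k\in\{0,\ldots,d_x\}$, where $\tilde{\mathcal{X}}_k = \mathcal{X}_k$ by definition and the containment holds trivially. For the inductive step I would fix $k>d_x$, assume $\mathcal{X}_j\subseteq\tilde{\mathcal{X}}_j$ for all $j<k$, and take an arbitrary $\mathbf{x}(k)\in\mathcal{X}_k$. By the NARMA recursion (\ref{NARMA}), realized through the MLP map $H$ as in (\ref{NN_v0})--(\ref{NN_yL}), this point is the image under $H$ of the delayed tuple built from $\mathbf{x}(k-1),\ldots,\mathbf{x}(k-d_x-1)$ and $\mathbf{u}(k-1),\ldots,\mathbf{u}(k-1-d_u)$. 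The inductive hypothesis puts each past state $\mathbf{x}(k-1-r)$, $r\in\{0,\ldots,d_x\}$, into $\tilde{\mathcal{X}}_{k-1-r}$, while $\mathbf{u}(\cdot)\in\mathcal{U}$ puts every delayed input into $\mathcal{U}$; hence the whole tuple lies in the product $\tilde{\mathcal{X}}_{k-d_x-1}\times\cdots\times\tilde{\mathcal{X}}_{k-1}\times\mathcal{U}^{d_u}$. Applying the correctness of \texttt{reachMLP} to this product then yields $\mathbf{x}(k)=H(\text{tuple})\in\tilde{\mathcal{X}}_k$, and since $\mathbf{x}(k)$ was arbitrary we conclude $\mathcal{X}_k\subseteq\tilde{\mathcal{X}}_k$.

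The step I expect to need the most care is the inclusion of the genuine set of admissible delayed tuples into the Cartesian product of the individual over-approximations. The true set of tuples $(\mathbf{x}(k-1),\ldots,\mathbf{x}(k-d_x-1),\mathbf{u}(k-1),\ldots)$ generated by actual trajectories is in general much smaller than this product, because consecutive states are tightly coupled through the dynamics; forming the product deliberately forgets these correlations and is exactly where conservatism enters. The point to verify carefully is that this replacement never discards a reachable point: every admissible tuple still satisfies the coordinate-wise membership supplied by the inductive hypothesis, so it remains inside the product, and therefore its $H$-image is retained by \texttt{reachMLP}. I would state this coordinate-wise containment explicitly and feed it into the correctness property above to close the induction.
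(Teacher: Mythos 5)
Your proposal is correct and follows essentially the same route as the paper, which establishes the proposition by exactly this recursive argument: the correctness of \texttt{reachMLP} (via Theorem \ref{thm1} and the layer-by-layer construction) applied first at $k=d_x+1$ and then repeated up to $k_f$, with the interval estimate obtained as the union over time instants. Your explicit treatment of the step where the set of admissible delayed tuples is replaced by the Cartesian product $\tilde{\mathcal{X}}_{k-d_x-1}\times\cdots\times\tilde{\mathcal{X}}_{k-1}\times\mathcal{U}^{d_u}$ — coordinate-wise membership guarantees no reachable point is discarded, at the acknowledged cost of forgetting correlations — is a careful formalization of a point the paper passes over silently, but it is the same argument, not a different one.
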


The iterative algorithm for estimating reachable set $\mathcal{X}_k$ and $\mathcal{X}_{k_f}$ is summarized as function \texttt{reachNARMA} in Algorithm \ref{alg3}.  

\begin{algorithm}
	\caption{Reachable set estimation for NARMA model} \label{alg3}
	\begin{algorithmic}[1]
		\Require  Weight matrices $\mathcal{W}^{[\ell]}$, bias $\boldsymbol{\theta}^{[\ell]}$, $\ell = 1, \ldots,L$, initial set $\mathcal{X}_0 \times \ldots \times \mathcal{X}_{d_x}$, input set $\mathcal{U}$, partition numbers $M_i, i \in \{ 1,\ldots,n^{[0]}\}$
		\Ensure Reachable set estimation $\mathcal{X}_k$, $\mathcal{X}_{[0,k_f]}$.
		
		\Function{reachNARMA}{$\mathcal{W}^{[\ell]}$, $\boldsymbol{\theta}^{[\ell]}$, $\ell = 1, \ldots,L$, $\mathcal{X}_0 \times \ldots \times \mathcal{X}_{d_x}$, $\mathcal{U}$, $M_i, i \in \{ 1,\ldots,n^{[0]}\}$}
		\For{$k =d_u+1:1:k_f$}
		\State $\mathcal{V} \gets  \mathcal{X}_{k-d_u-1} \times\ldots \times \mathcal{X}_{k-1} \times \mathcal{U}^{d_u}$
		\State $\mathcal{X}_k \gets \mathrm{reachMLF}(\mathcal{W}^{[\ell]}, \boldsymbol{\theta}^{[\ell]}, \ell = 1, \ldots,L, \mathcal{V},M_i, i \in \{ 1,\ldots,n^{[0]}\}$.
		
		\EndFor
		
		\State 	$\mathcal{X}_{[0,k_f]} \gets \bigcup\nolimits_{s=0}^{k_f}\mathcal{X}_{s}$
		\State \Return $\mathcal{X}_k$, $k=0,1\ldots,k_f$, $\mathcal{X}_{[0,k_f]}$
		\EndFunction
	\end{algorithmic}
\end{algorithm}

Function \texttt{reachNARMA} is sufficient to solve the reachable set estimation problem for an NARMA model, that is Problem \ref{problem1}. Then,
we can move forward to Problem \ref{problem2}, the safety verification
problem for an NARMA model with a given safety specification $\mathcal{S}$ over a finite interval $[0,k_f]$, with the aid of estimated reachable set $\mathcal{X}_{[0,k_f]}$. Given a safety specification $\mathcal{S}$, the empty intersection between over-approximation $\tilde{\mathcal{X}}_{[0,k_f]}$ and $\neg \mathcal{S}$, namely $\tilde{\mathcal{X}}_{[0,k_f]} \cap \neg \mathcal{S} =\emptyset$, naturally leads to $\mathcal{X}_{[0,k_f]} \cap \neg \mathcal{S} =\emptyset$ due to $\mathcal{X}_{[0,k_f]}\subseteq\tilde{\mathcal{X}}_{[0,k_f]}$.  The safety verification result is summarized by the following proposition.

\begin{proposition}\label{proposition_2}
	Consider NARMA model (\ref{NARMA}) with initial set $\mathcal{X}_0 \times \ldots \times \mathcal{X}_{d_x}$,  input set $\mathcal{U}$, and a safety specification $\mathcal{S}$, the NARMA model
	 (\ref{NARMA}) is safe in interval $[0,k_f]$, if $ \tilde{\mathcal{X}}_{[0,k_f]} \cap \neg \mathcal{S} = \emptyset$, where $\tilde{\mathcal{X}}_{[0,k_f]} = \mathrm{reachNARMA}(\mathcal{W}^{[\ell]}, \boldsymbol{\theta}^{[\ell]}, \ell = 1, \ldots,L, \mathcal{X}_0 \times \ldots \times \mathcal{X}_{d_x}, \mathcal{U}, M_i, i \in \{ 1,\ldots,n^{[0]}\})$  obtained by Algorithm \ref{alg3}.
\end{proposition}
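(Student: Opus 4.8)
The plan is to reduce Proposition \ref{proposition_2} to Lemma \ref{lemma1} by first establishing that the set $\tilde{\mathcal{X}}_{[0,k_f]}$ returned by \texttt{reachNARMA} is a genuine over-approximation of the true reachable set, i.e. that $\mathcal{X}_{[0,k_f]} \subseteq \tilde{\mathcal{X}}_{[0,k_f]}$. Once this containment is in hand, the hypothesis $\tilde{\mathcal{X}}_{[0,k_f]} \cap \neg\mathcal{S} = \emptyset$ together with Lemma \ref{lemma1} immediately yields that the NARMA model is safe on $[0,k_f]$, which is exactly the assertion. Thus the entire substance of the proof lies in the soundness of the recursion carried out in the Proposition preceding Algorithm \ref{alg3}, and I would make that recursion rigorous by induction.

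First I would prove by induction on $k$ that $\mathcal{X}_k \subseteq \tilde{\mathcal{X}}_k$ for every $k \in \{0,\ldots,k_f\}$. For the base case $k \le d_x$ the algorithm sets $\tilde{\mathcal{X}}_k = \mathcal{X}_k$, so the inclusion holds with equality. For the inductive step, fix $k > d_x$ and assume $\mathcal{X}_j \subseteq \tilde{\mathcal{X}}_j$ for all $j < k$. Let $\mathcal{V}_k$ denote the true set of input histories feeding the MLP at step $k$, and let $\tilde{\mathcal{V}}_k = \tilde{\mathcal{X}}_{k-d_x-1}\times\cdots\times\tilde{\mathcal{X}}_{k-1}\times\mathcal{U}^{d_u}$ be the box handed to \texttt{reachMLP}. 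Any admissible history is contained coordinatewise in this product by the induction hypothesis, so $\mathcal{V}_k \subseteq \tilde{\mathcal{V}}_k$; since $\mathcal{X}_k \subseteq \{H(\mathbf{v}) : \mathbf{v} \in \mathcal{V}_k\}$ (the product may contain mutually inconsistent histories, so only an inclusion is guaranteed) and the image of any map preserves set inclusion, it follows that $\mathcal{X}_k \subseteq \{H(\mathbf{v}) : \mathbf{v} \in \tilde{\mathcal{V}}_k\}$.

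The key step, and the one I expect to be the main obstacle, is to show that $\{H(\mathbf{v}) : \mathbf{v} \in \tilde{\mathcal{V}}_k\} \subseteq \texttt{reachMLP}(\tilde{\mathcal{V}}_k) = \tilde{\mathcal{X}}_k$, that is, that \texttt{reachMLP} is sound on an arbitrary box input. I would establish this from two facts. First, \emph{single-layer soundness composes across layers}: Theorem \ref{thm1} shows that for one layer the image of an input cell lies inside the interval cell computed by (\ref{thm1_1})--(\ref{thm1_2}); feeding that output cell into the next layer and using that any map carries subsets to subsets, an induction on $\ell = 1,\ldots,L$ shows the cell produced at the output layer contains the true image $\{H(\mathbf{v}) : \mathbf{v} \in \mathcal{P}\}$ of the input cell $\mathcal{P}$. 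Assumption \ref{assumption_1} is precisely what makes the interval endpoints in (\ref{thm1_1}) correct, so it must be invoked at each layer. Second, \emph{the partition covers the input}: because the cells produced by \texttt{cell} satisfy $\tilde{\mathcal{V}}_k \subseteq \bigcup_p \mathcal{P}_p$, every $\mathbf{v} \in \tilde{\mathcal{V}}_k$ lies in some cell, so $\{H(\mathbf{v}) : \mathbf{v} \in \tilde{\mathcal{V}}_k\} \subseteq \bigcup_p \{H(\mathbf{v}) : \mathbf{v}\in\mathcal{P}_p\} \subseteq \bigcup_p \tilde{\mathcal{Y}}_p = \tilde{\mathcal{X}}_k$. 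Chaining these two containments closes the inductive step.

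Finally I would take the union over $s = 0,\ldots,k_f$ of the per-step inclusions $\mathcal{X}_s \subseteq \tilde{\mathcal{X}}_s$ to obtain $\mathcal{X}_{[0,k_f]} = \bigcup_s \mathcal{X}_s \subseteq \bigcup_s \tilde{\mathcal{X}}_s = \tilde{\mathcal{X}}_{[0,k_f]}$, and then apply Lemma \ref{lemma1}. The only genuinely delicate point is the layer-by-layer composition of Theorem \ref{thm1}: one must check that the over-approximating cell emitted by layer $\ell$ may legitimately serve as the input cell to layer $\ell+1$, which is sound precisely because the true reachable set of layer $\ell$ sits inside that cell and the layer-$(\ell+1)$ map preserves inclusions. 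Everything else is routine bookkeeping with products of intervals and unions.
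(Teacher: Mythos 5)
Your proof is correct and takes essentially the same route as the paper: the paper obtains Proposition \ref{proposition_2} by exactly the reduction you describe, namely combining the containment $\mathcal{X}_{[0,k_f]}\subseteq\tilde{\mathcal{X}}_{[0,k_f]}$ (asserted via the recursive over-approximation of the preceding proposition, whose soundness rests on Theorem \ref{thm1} applied layer-by-layer and unioned over the partition cells) with Lemma \ref{lemma1}. The only difference is rigor, not substance: the paper leaves the induction on $k$ and the layer-by-layer composition informal, whereas you make them explicit, including the correct observation that the Cartesian product $\tilde{\mathcal{X}}_{k-d_x-1}\times\cdots\times\tilde{\mathcal{X}}_{k-1}\times\mathcal{U}^{d_u}$ may contain mutually inconsistent histories and therefore only yields an inclusion.
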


Function \texttt{verifyNARMA} is developed based on Proposition \ref{proposition_2} for Problem \ref{problem2}, the safety verification problem for NARMA model.  If function \texttt{verifyNARMA} returns SAFE then the NARMA model is safe. If it returns UNCERTAIN, caused by the fact $\tilde{\mathcal{X}}_{[0,k_f]}$, that means the safety property is unclear for this case. 

\begin{algorithm}
	\caption{Safety verification for NARMA model} \label{alg4}
	\begin{algorithmic}[1]
		\Require  Weight matrices $\mathcal{W}^{[\ell]}$, bias $\boldsymbol{\theta}^{[\ell]}$, $\ell = 1, \ldots,L$, initial set $\mathcal{X}_0 \times \ldots \times \mathcal{X}_{d_x}$, input set $\mathcal{U}$, partition numbers $M_i, i \in \{ 1,\ldots,n^{[0]}\}$, safety specification $\mathcal{S}$
		\Ensure SAFE or UNCERTAIN.
		
		\Function{verifyNARMA}{$\mathcal{W}^{[\ell]}$, $\boldsymbol{\theta}^{[\ell]}$, $\ell = 1, \ldots,L$, $\mathcal{X}_0 \times \ldots \times \mathcal{X}_{d_x}$, $\mathcal{U}$, $M_i, i \in \{ 1,\ldots,n^{[0]}\}$, $\mathcal{S}$}
		
		\State  $\mathcal{X}_[0,k_f] \gets \mathrm{reachNARMA}(\mathcal{W}^{[\ell]},\boldsymbol{\theta}^{[\ell]},\ell = 1, \ldots,L,\mathcal{X}_0 \times \ldots \times \mathcal{X}_{d_x},\mathcal{U},M_i, i \in \{ 1,\ldots,n^{[0]}\})$
		\If{$\mathcal{X}_[0,k_f] \cap \mathcal{S} = \emptyset$}
		\State \Return SAFE
		\Else 
		\State \Return UNCERTAIN
		\EndIf
		\EndFunction
	\end{algorithmic}
\end{algorithm}

An numerical example is provided to show the effectiveness of our developed approach.

\begin{example}
	In this example, we consider an NARMA model as below:
	\begin{equation}
	\mathbf{x}(k+1) = f(\mathbf{x}(k),\mathbf{u}(k)),
	\end{equation}
	where $\mathbf{x}(k),\mathbf{u}(k) \in \mathbb{R}$. We use an MLP with 2 inputs, 1 outputs and 1 hidden layer consisting of 5 neurons to approximate $f$ with weight matrices and bias vectors below: 
	\begin{align*}
	&\mathbf{W}^{[1]}=\left[ {\begin{array}{*{20}c}
	0.1129 &   0.4944 \\
	2.2371  &  0.4389 \\
	-1.1863  & -0.7365 \\
	0.2965  &  0.3055 \\
	-0.6697  &  0.5136 \\
		\end{array} } \right],~\boldsymbol{\theta}^{[1]}=\left[ {\begin{array}{*{20}c}
	 -13.8871 \\
	 -8.2629 \\
	 5.8137 \\
	 -3.2035 \\
	 -0.6697 \\
		\end{array} } \right],
	\\
	&\mathbf{W}^{[2]}=\left[ {\begin{array}{*{20}c}
		 -3.3067  &  1.3905  & -0.6422 &   2.5221    & 1.8242
		\\
		\end{array} } \right],~
	\boldsymbol{\theta}^{[2]}=\left[ {\begin{array}{*{20}c}
		5.8230 \\
		\end{array} } \right]
	\end{align*} 
    
The activation function for the hidden layer is choose \texttt{tanh} function and \texttt{purelin} function is for the output layer.	
The initial set and input set are given by the following set
\begin{align}
\mathcal{X}_0 &= \{\mathbf{x}(0) \in \mathbb{R} \mid - 0.2 \le \mathbf{x}(0) \le 0.2\}, \\
\mathcal{U}& = \{\mathbf{u}(k) \in \mathbb{R} \mid 0.8 \le \mathbf{u}(k) \le 1.2,~\forall k \in \mathbb{N} \}.
\end{align}

We set the partition numbers to be $M_1 = M_2 =10$, where $M_1$ is for input $\mathbf{u}$ and $M_2$ is for state $\mathbf{x}$. The time horizon for the reachable set estimation is set to be $ [0, 50]$. Using function \texttt{reachNARMA}, the reachable set can be estimated, which is shown in Figure \ref{reach_4}. 	To show the effectiveness of our proposed approach, we randomly generate 100 state trajectories that are all within the estimated reachable set. 

Furthermore, with the estimated reachable set, the safety verification can be easily performed. For example, if the safety region is assumed to be $\mathcal{S} = \{\mathbf{x}\in \mathbb{R}\mid \mathbf{x} \le 16\}$, it is easy to verify that $\tilde{\mathcal{X}}_{[0,50]} \cap \neg \mathcal{S} = \emptyset$ which means the NARMA model is safe. 
	
\begin{figure}[h!]
	\begin{center}
	\includegraphics[width=12cm]{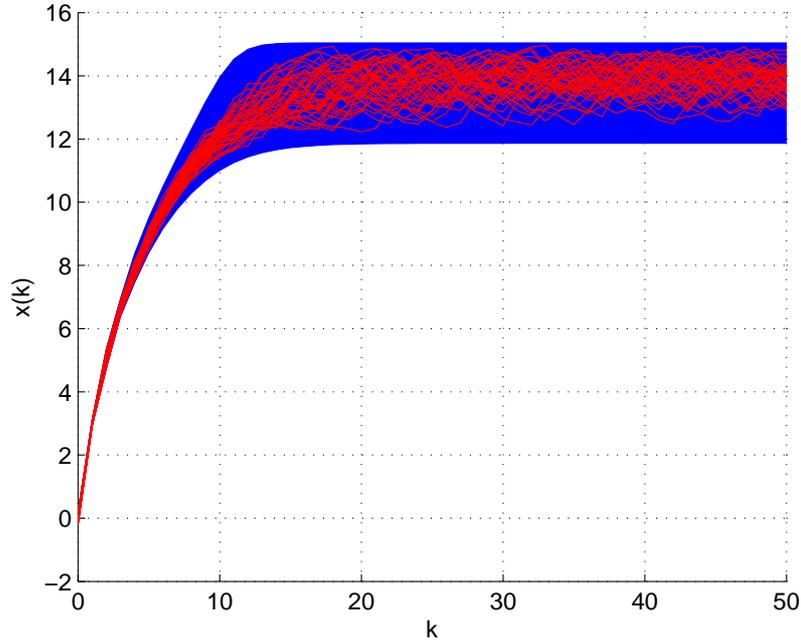}
	\caption{Reachable set estimation for NARMA model. Blue area is the estimated reachable set and red solid lines are 100 randomly generated state trajectories. All the randomly generated state trajectories are in the reachable set estimation area. }
		\label{reach_4}
		\end{center}
	\end{figure}
\end{example}

\section{Magnetic Levitation Systems (Maglev)}
\subsection{Brief Introduction}
\par Magnetic Levitation Systems, which are called Maglev Systems in short,  are systems in which an object is suspended exclusively by the presence of magnetic fields. In such schemes, the force exerted by the presence of magnetic fields is able to counteract gravity and any other forces acting on the object \cite{939074}. In order to achieve levitation, there are two principle concerns. The first concern is to exert a sufficient lifting force with which to counteract gravity and the second concern is stability. Once levitation has been achieved, it is critical to ensure that the system does not move into a configuration in which the lifting forces are neutralized \cite{996030}. However, attaining stable levitation is a considerably complex task, and in his famous theorem, Samuel Earnshaw demonstrated that there is no static configuration of stability for magnetic systems \cite{Duffin1963}. Intuitively, the instability of magnetic systems lies in the fact that magnetic attraction or repulsion increases or decreases in relation to the square of distance. Thus, most control strategies  for  Maglev Systems make use of servo-mechanisms \cite{4696056} and a feedback linearization \cite{6819156} around a particular operating point of the complex nonlinear differential equations \cite{6852423} describing the sophisticated mechanical and electrical dynamics. Despite their intrinsic complexity, these systems have exhibited utility in numerous contexts and in particular  Maglev System have generated considerable scientific interest in transportation due to their ability to minimize mechanical loss, allow faster travel \cite{6936124}, minimize mechanical vibration, and emit low levels of noise\cite{1322129}. Other application domains of such systems include wind tunnel levitation \cite{4696056}, contact-less melting, magnetic bearings, vibrator isolation systems, and rocket-guiding designs \cite{937416}. Consequently,  Maglev Systems have been  extensively studied in control literature \cite{939074}.
\par Due to their unstable, complex, and nonlinear nature, it is difficult to build a precise feedback control model for the dynamic behavior of complex  Maglev System. In most cases, a linearization of the nonlinear dynamics is susceptible to a great deal of inaccuracy and uncertainty. As the system deviates from an assumed operating point, the accuracy of the model deteriorates \cite{7840137}. Additionally, models based on simplifications are often unable to handle the presence of disturbance forces. Thus, to improve control schemes, a stricter adherence to the complex nonlinear nature of the  Maglev Systems is needed. In the last several years, neural network control systems have received significant attention due to their ability to capture complex nonlinear dynamics and model nonlinear unknown parameters \cite{6852423}. 

In the control of magnetic levitation systems the nonlinear nature can be modeled by a neural network that is able to describe the input-output nature of  the nonlinear dynamics \cite{4696056}. Neural networks have shown the ability to approximate any nonlinear function to any desired accuracy \cite{1519654}. Using the a neural network model of the  plant we wish to control, a controller can be designed to meet system specifications. While neural control schemes have been successful in creating stable controllers for nonlinear systems, it is essential to demonstrate that these systems do not enter undesirable states. As an example, in the requirements for a  Maglev train system developed in 1997 by the Japanese Ministry of transportation, the measurements of the 500 km/h train's position and speed could deviate by a maximum of 3 cm and 1 km/h, respectively, in order to prevent derailment and contact with the railway \cite{988732}. As magnetic systems become more prevalent in transportation and in other domains,  the verification of these systems is essential. Thus, in this example, we perform a reachable set estimation of a NARMA neural network model (\ref{NARMA}) of a  Maglev System.

\subsection{Neural Network Model}
\par The  Maglev System we consider consists of a magnet suspended above an electromagnet where the magnet is confined to only moving in the vertical direction \cite{939074}. Using the results of De J\'esus et. al \cite{939074}, the nonlinear equation of motion for the system is 
\begin{equation}
\frac{d^2y(t)}{dt^2}=-g+\frac{\alpha}{M}\frac{i^2(t)}{y(t)}-\frac{\beta}{M}\frac{dy(t)}{dt},
\end{equation}
where $y(t)$ is the vertical position of the magnet above the electromagnet in $mm$, $i(t)$, in Amperes, is the current flowing in the electromagnet, $M$ is the mass of the magnet, $g$ is the gravitational constant, $\beta$ is the frictional coefficient, and $\alpha$ is the field strength constant. The frictional coefficient $\beta$ is dictated by the material in which the magnet moves. In our case, the magnet moves through air. The field strength constant $\alpha$ is determined by the number of turns of wire in our electromagnet and by the strength of the magnet being levitated \cite{939074}. 
\par To capture the nonlinear input-output dynamics of the system, we trained a NARMA neural network (\ref{NARMA}) to predict the magnet's future position values. In order to predict the magnet's future position values, two inputs are supplied to the network: the first is the past value of  the current flowing in the electromagnet $i(k-1)$ and the second input is the magnet's previous position value $y(k-1)$. The output of the neural network is the current position $y(k)$. The network consists of one hidden layer with eight neurons and an output layer with one neuron. The transfer function of the first layer is \texttt{tanh} and \texttt{purelin} for the output layer.

The network is trained using a data set consisting of 4001 target position values for the output and 4001 input current values. The Levenberg-Marquard algorithm \cite{847778} is used to train the network using batch training. Using batch training, the weights and biases of the NARMA model (\ref{NARMA}) are updated after all the inputs and targets are supplied to the network and a gradient descent algorithm is used to minimize error \cite{Beale12neuralnetwork}. To avoid over-fitting the network, the training data is divided randomly into three sets: the training set, which consists of 2801 values, the validation set, which consists of 600 values,  and a test set which is the same size as the validation set. The training set is used to adjust the weight and bias values of the network as well as to compute the gradient, while the validation set is used to measure the network's generalization. Training of the networks ceases when the network's generalization to input data stops improving. The testing data does not take part into the training, but it is used to check the performance of the net during and after training.

In this example, we set the minimum gradient to $10^{-7}$, and set the number of validation checks to 6. Thus, training ceases if the error on the validation set increases for 6 consecutive iterations or the minimum gradient achieves a value of $10^{-7}$. In our case, the training stopped when the validation checks reached its limit of 6, obtaining a performance of 0.000218. 
Initially, before the training begins, the values of the weights, biases, and training set are initialized randomly. Thus, the value of the weights and the biases may be different every time that the network is trained. The weights and biases of the hidden layer are 
\[W^{[1]} =\begin{bmatrix}
      -68.9367 & -3.3477 \\[0.3em]
      -0.0802 & -2.1460 \\[0.3em]
      0.1067 & -3.7875 \\[0.3em]
      0.1377 & -1.5763 \\[0.3em]
      -0.3954 & -1.4477 \\[0.3em]
      -0.4481 & -6.9485 \\[0.3em]
      0.0030 & 1.5819 \\[0.3em]
      5.9623 & -5.5775         
\end{bmatrix},
\boldsymbol{\theta}^{[1]} = \begin{bmatrix}
      47.8492 \\[0.3em]
      2.2129 \\[0.3em]
      1.9962 \\[0.3em]
      -0.0091 \\[0.3em]
      -0.0727 \\[0.3em]
      -3.8435 \\[0.3em]
      1.7081 \\[0.3em]
      7.5619 
\end{bmatrix}\]
and in the output layer, the weights and the biases are
\[W^{[2]} =\begin{bmatrix}
      -0.0054  & -0.3285 &  -0.0732 &  -0.4019 &  -0.1588 &  -0.0128  &  0.5397 &  -0.0279   
\end{bmatrix},\]
\[\boldsymbol{\theta}^{[2]} = \begin{bmatrix}
      0.1095
\end{bmatrix}.\]

Once the NARMA network model (\ref{NARMA}) is trained and the weight and bias values are adjusted to the values shown above, the reachable set estimation of the system can be computed and a safety requirement $\mathcal{S}$ could be verified. This computation is executed following the process described in the previous section. 

\subsection{Reachable Set Estimation}
In order to compute the reachable set and verify if the given specification is satisfied, Algorithm \ref{alg3} is employed. First, the reachable set estimation using 5 partitions is computed, followed by the reachable set estimation using 20 partitions. After both reachable set estimations are calculated, 200 random trajectories are generated and plotted into Figure \ref{5partitions}. 

The reachable set estimations and the random trajectories are computed with an initial set and input set that are assumed to be given by 
\begin{align}
\mathcal{X}_0 &= \{\mathbf{x}(0) \in \mathbb{R} \mid 4.00 \le \mathbf{x}(0) \le 5.00\}, \\
\mathcal{U}& = \{\mathbf{u}(k) \in \mathbb{R} \mid 0.10 \le \mathbf{u}(k) \le 1.10,~\forall k \in \mathbb{N} \}.
\end{align}

\begin{figure}[h!]
      \begin{center}
 \includegraphics[width=1\textwidth,height=0.5\textheight]{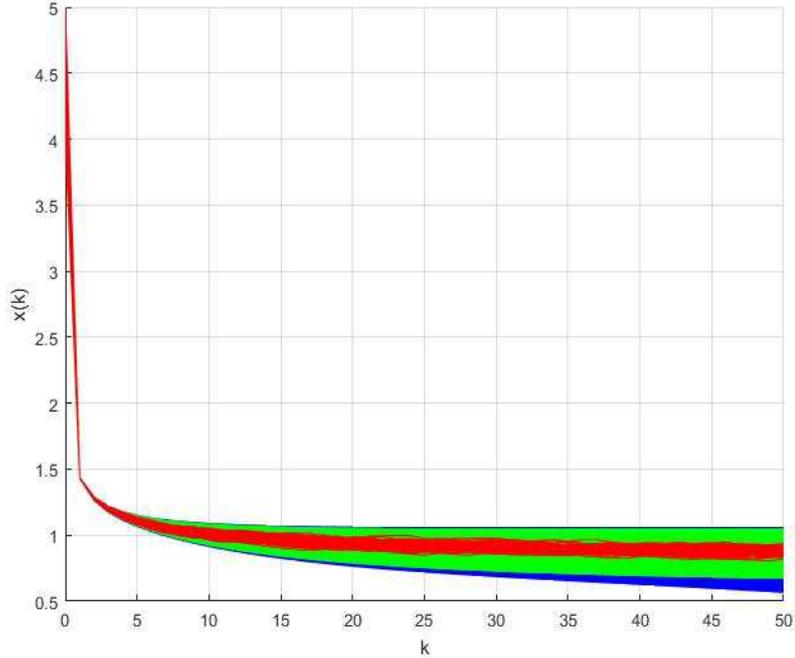} 
        \caption{Reachable set estimation using 5 and 20 partitions. The blue area corresponds to the estimated reachable set using 5 partitions, the tighter green area corresponds to the reachable set estimation using 20 partitions, and the red lines correspond to 200 randomly generated state trajectories, which all of them lie within the estimated reachable set area.}        \label{5partitions}
      \end{center}
\end{figure}
As is observed from Figure \ref{5partitions}, all the randomly generated trajectories lie within the estimated reachable set. Also, it can be noted that the area of the reachable set estimation using a larger partition number, that is 20, represented in green,  it is smaller than the blue area, which corresponds to the reachable set estimation using a lower partition number ($M_1=M_2=5$) . This is especially noticeable as the time $k$ increases to 40--50, where the difference between the blue region and green region increases as the lower limit of the state $\mathbf{x}(k)$ using 5 partitions keeps decreasing towards $0.6$, while the lower limit of the green area maintains a more steady line at $0.7$ approximately.

	\begin{table}[ht!]
		\centering
		\caption{Computational time for different partition numbers}\label{tab_2}
		\begin{tabular}{c||c}
			\hline
	    	Partition Number & Computation Time \\
			\hline
			$M_1=M_2=5$  & 0.048700 seconds \\
			\hline
			 $M_1=M_2=20$ &  0.474227 seconds \\
			\hline
		\end{tabular}
	\end{table} 
In Table \ref{tab_2}, the computational time has been recorded for each reachable set estimation. It can be observed that the computational time increases as the partition number increases. For this system, the computational time is approximately 10 times greater when 20 partitions are used. This means that every approach has its different advantages. For the cases when a more precise estimation is needed, we can increase the number of partitions, while for the cases when an larger over-approximation is enough, the number of partitions may be decreased to reduce its computational cost. 

The reachable set estimation for the NARMA neural network model (\ref{NARMA}) of the Maglev Systems shows that all system responses to inputs are contained within the reachable set. Thus, our over-approximation of the reachable states is valid. Given a safety specification $\mathcal{S}$ and the reachable set calculated using Algorithm \ref{alg3}, we are able to determine whether our system model satisfies $\mathcal{S}$. In our example, we did not perform a safety analysis but rather demonstrated the robustness of Algorithm \ref{alg3} in capturing a large number of possible predictions of the NARMA network model (\ref{NARMA}). 
The magnet in our example was confined to moving in one dimension. In magnetic levitation systems that are not physically constrained to a set of axes, there are six degrees of freedom  (three rotational and three transnational) \cite{BerkelmanLorentz}. Thus, while we have demonstrated that our algorithm is robust for two-dimensional systems, it will be good to demonstrate its efficacy on higher dimensional systems. However, as the dimensionality and size of the neural networks increases, the computation time needed to compute the reachable set increases significantly as well. 
 
\section{Conclusions}
This paper studies the reachable set estimation problem for neural network NARMA model of nonlinear dynamic systems. By partitioning the input set into a finite number of cells, the reachable set estimation for MLPs can be done for each individual cells and get the union of output set of cells to form an over-approximation of output set. Then, the reachable set estimation for NARMA models can be performed by iterating the reachable set estimation process for MLP step by step to establish an estimation for state trajectories of an NARMA model. Safety property of an NARMA model can be verified by checking the intersection between the estimated reachable set and unsafe regions. The approach is demonstrated by a Maglev System, for which the reachable set of its NARMA neural network model is estimated. The approach is applicable for a variety of neural network models with different activation functions. However, since the estimation is an over-approximation and the error will accumulate at each layer, much finer discretization for input space is required for deep neural networks that essentially have large number of layers, which will introduce large computation effort, otherwise the estimation results will be too conservative. Reducing the conservativeness caused by the increase of layers and generalizing it to deep neural network will be our future focus for our approach. 

\subsubsection*{Acknowledgments.} The material presented in this paper is based upon work supported by
the National Science Foundation (NSF) under grant numbers CNS 1464311,
EPCN 1509804, and SHF 1527398, the Air Force Research Laboratory
(AFRL) through contract numbers FA8750-15-1-0105 and FA8650-12-3-
7255 via subcontract number WBSC 7255 SOI VU 0001, and the Air Force
Office of Scientific Research (AFOSR) under contract numbers FA9550-15-
1-0258 and FA9550-16-1-0246. The U.S. government is authorized to reproduce
and distribute reprints for Governmental purposes notwithstanding
any copyright notation thereon. Any opinions, findings, and conclusions or
recommendations expressed in this publication are those of the authors and
do not necessarily reflect the views of AFRL, AFOSR, or NSF.

\bibliographystyle{plain} 
\bibliography{ref,maglev}

\begin{thebibliography}{10}

\bibitem{7840137}
J.~I. Baig and A.~Mahmood.
\newblock Robust control design of a magnetic levitation system.
\newblock In {\em 2016 19th International Multi-Topic Conference (INMIC)},
  pages 1--5, Dec 2016.

\bibitem{bak2017hylaa}
Stanley Bak and Parasara~Sridhar Duggirala.
\newblock Hy\textsc{LAA}: A tool for computing simulation-equivalent
  reachability for linear systems.
\newblock In {\em Proceedings of the 20th International Conference on Hybrid
  Systems: Computation and Control}, pages 173--178. ACM, 2017.

\bibitem{bak2017rigorous}
Stanley Bak and Parasara~Sridhar Duggirala.
\newblock Rigorous simulation-based analysis of linear hybrid systems.
\newblock In {\em International Conference on Tools and Algorithms for the
  Construction and Analysis of Systems}, pages 555--572. Springer, 2017.

\bibitem{Beale12neuralnetwork}
Mark~Hudson Beale, Martin~T. Hagan, and Howard~B. Demuth.
\newblock Neural network toolbox™ user’s guide.
\newblock In {\em R2016a, The MathWorks, Inc., 3 Apple Hill Drive Natick, MA
  01760-2098, , www.mathworks.com}, 2012.

\bibitem{BerkelmanLorentz}
Peter~J. Berkelman and Ralph~L. Hollis.
\newblock Lorentz magnetic levitation for haptic interaction: Device design,
  performance, and integration with physical simulations.
\newblock {\em The International Journal of Robotics Research}, 19(7):644--667,
  2000.

\bibitem{bojarski2016end}
Mariusz Bojarski, Davide Del~Testa, Daniel Dworakowski, Bernhard Firner, Beat
  Flepp, Prasoon Goyal, Lawrence~D Jackel, Mathew Monfort, Urs Muller, Jiakai
  Zhang, et~al.
\newblock End to end learning for self-driving cars.
\newblock {\em arXiv preprint arXiv:1604.07316}, 2016.

\bibitem{Duffin1963}
R.~J. Duffin.
\newblock Free suspension and earnshaw's theorem.
\newblock {\em Archive for Rational Mechanics and Analysis}, 14(1):261--263,
  Jan 1963.

\bibitem{duggirala2015c2e2}
Parasara~Sridhar Duggirala, Sayan Mitra, Mahesh Viswanathan, and Matthew Potok.
\newblock \textsc{C2E2}: a verification tool for stateflow models.
\newblock In {\em International Conference on Tools and Algorithms for the
  Construction and Analysis of Systems}, pages 68--82. Springer, 2015.

\bibitem{fan2016automatic}
Chuchu Fan, Bolun Qi, Sayan Mitra, Mahesh Viswanathan, and Parasara~Sridhar
  Duggirala.
\newblock Automatic reachability analysis for nonlinear hybrid models with
  \textsc{C2E2}.
\newblock In {\em International Conference on Computer Aided Verification},
  pages 531--538. Springer, 2016.

\bibitem{ge1999adaptive}
Shuzhi~Sam Ge, Chang~Chieh Hang, and Tao Zhang.
\newblock Adaptive neural network control of nonlinear systems by state and
  output feedback.
\newblock {\em IEEE Transactions on Systems, Man, and Cybernetics, Part B
  (Cybernetics)}, 29(6):818--828, 1999.

\bibitem{937416}
A.~El Hajjaji and M.~Ouladsine.
\newblock Modeling and nonlinear control of magnetic levitation systems.
\newblock {\em IEEE Transactions on Industrial Electronics}, 48(4):831--838,
  Aug 2001.

\bibitem{hornik1989multilayer}
Kurt Hornik, Maxwell Stinchcombe, and Halbert White.
\newblock Multilayer feedforward networks are universal approximators.
\newblock {\em Neural Networks}, 2(5):359--366, 1989.

\bibitem{huang2016safety}
Xiaowei Huang, Marta Kwiatkowska, Sen Wang, and Min Wu.
\newblock Safety verification of deep neural networks.
\newblock {\em arXiv preprint arXiv:1610.06940}, 2016.

\bibitem{hunt1992neural}
K~Jetal Hunt, D~Sbarbaro, R~{\.Z}bikowski, and Peter~J Gawthrop.
\newblock Neural networks for control systems: a survey.
\newblock {\em Automatica}, 28(6):1083--1112, 1992.

\bibitem{939074}
O.~De Jesus, A.~Pukrittayakamee, and M.~T. Hagan.
\newblock A comparison of neural network control algorithms.
\newblock In {\em Neural Networks, 2001. Proceedings. IJCNN '01. International
  Joint Conference on}, volume~1, pages 521--526 vol.1, 2001.

\bibitem{1322129}
J.~Kaloust, C.~Ham, J.~Siehling, E.~Jongekryg, and Q.~Han.
\newblock Nonlinear robust control design for levitation and propulsion of a
  maglev system.
\newblock {\em IEE Proceedings - Control Theory and Applications},
  151(4):460--464, July 2004.

\bibitem{katz2017reluplex}
Guy Katz, Clark Barrett, David Dill, Kyle Julian, and Mykel Kochenderfer.
\newblock Reluplex: An efficient \textsc{SMT} solver for verifying deep neural
  networks.
\newblock {\em arXiv preprint arXiv:1702.01135}, 2017.

\bibitem{6936124}
C.~H. Kim, J.~Lim, J.~M. Lee, H.~S. Han, and D.~Y. Park.
\newblock Levitation control design of super-speed maglev trains.
\newblock In {\em 2014 World Automation Congress (WAC)}, pages 729--734, Aug
  2014.

\bibitem{lawrence1997face}
Steve Lawrence, C~Lee Giles, Ah~Chung Tsoi, and Andrew~D Back.
\newblock Face recognition: A convolutional neural-network approach.
\newblock {\em IEEE Transactions on Neural Networks}, 8(1):98--113, 1997.

\bibitem{1519654}
Xiao-Dong Li, J.~K.~L. Ho, and T.~W.~S. Chow.
\newblock Approximation of dynamical time-variant systems by continuous-time
  recurrent neural networks.
\newblock {\em IEEE Transactions on Circuits and Systems II: Express Briefs},
  52(10):656--660, Oct 2005.

\bibitem{847778}
L.~S.~H. Ngia and J.~Sjoberg.
\newblock Efficient training of neural nets for nonlinear adaptive filtering
  using a recursive levenberg-marquardt algorithm.
\newblock {\em IEEE Transactions on Signal Processing}, 48(7):1915--1927, Jul
  2000.

\bibitem{988732}
M.~Ono, S.~Koga, and H.~Ohtsuki.
\newblock Japan's superconducting maglev train.
\newblock {\em IEEE Instrumentation Measurement Magazine}, 5(1):9--15, Mar
  2002.

\bibitem{pulina2010abstraction}
Luca Pulina and Armando Tacchella.
\newblock An abstraction-refinement approach to verification of artificial
  neural networks.
\newblock In {\em International Conference on Computer Aided Verification},
  pages 243--257. Springer, 2010.

\bibitem{pulina2012challenging}
Luca Pulina and Armando Tacchella.
\newblock Challenging \textsc{SMT} solvers to verify neural networks.
\newblock {\em AI Communications}, 25(2):117--135, 2012.

\bibitem{996030}
D.~M. Rote and Yigang Cai.
\newblock Review of dynamic stability of repulsive-force maglev suspension
  systems.
\newblock {\em IEEE Transactions on Magnetics}, 38(2):1383--1390, Mar 2002.

\bibitem{schmidhuber2015deep}
J{\"u}rgen Schmidhuber.
\newblock Deep learning in neural networks: An overview.
\newblock {\em Neural Networks}, 61:85--117, 2015.

\bibitem{silver2016mastering}
David Silver, Aja Huang, Chris~J Maddison, Arthur Guez, Laurent Sifre, George
  Van Den~Driessche, Julian Schrittwieser, Ioannis Antonoglou, Veda
  Panneershelvam, Marc Lanctot, et~al.
\newblock Mastering the game of go with deep neural networks and tree search.
\newblock {\em Nature}, 529(7587):484--489, 2016.

\bibitem{szegedy2013intriguing}
Christian Szegedy, Wojciech Zaremba, Ilya Sutskever, Joan Bruna, Dumitru Erhan,
  Ian Goodfellow, and Rob Fergus.
\newblock Intriguing properties of neural networks.
\newblock {\em arXiv preprint arXiv:1312.6199}, 2013.

\bibitem{thuan2016reachable}
Mai~Viet Thuan, Hieu~Manh Tran, and Hieu Trinh.
\newblock Reachable sets bounding for generalized neural networks with interval
  time-varying delay and bounded disturbances.
\newblock {\em Neural Computing and Applications}, pages 1--12, 2016.

\bibitem{6819156}
R.~Uswarman, A.~I. Cahyadi, and O.~Wahyunggoro.
\newblock Control of a magnetic levitation system using feedback linearization.
\newblock In {\em 2013 International Conference on Computer, Control,
  Informatics and Its Applications (IC3INA)}, pages 95--98, Nov 2013.

\bibitem{4696056}
R.~J. Wai and J.~D. Lee.
\newblock Robust levitation control for linear maglev rail system using fuzzy
  neural network.
\newblock {\em IEEE Transactions on Control Systems Technology}, 17(1):4--14,
  Jan 2009.

\bibitem{xiang2015equivalence}
Weiming Xiang.
\newblock On equivalence of two stability criteria for continuous-time switched
  systems with dwell time constraint.
\newblock {\em Automatica}, 54:36--40, 2015.

\bibitem{xiang2016necessary}
Weiming Xiang.
\newblock Necessary and sufficient condition for stability of switched
  uncertain linear systems under dwell-time constraint.
\newblock {\em IEEE Transactions on Automatic Control}, 61(11):3619--3624,
  2016.

\bibitem{xiang2018Parameter}
Weiming Xiang.
\newblock Parameter-memorized \textsc{L}yapunov functions for discrete-time
  systems with time-varying parametric uncertainties.
\newblock {\em Automatica}, 87:450--454, 2018.

\bibitem{xiang2017stability}
Weiming Xiang, James Lam, and Jun Shen.
\newblock Stability analysis and $\mathcal{L}_1$-gain characterization for
  switched positive systems under dwell-time constraint.
\newblock {\em Automatica}, 85:1--8, 2017.

\bibitem{xiang2017robust}
Weiming Xiang, Hoang-Dung Tran, and T.~T. Johnson.
\newblock Robust exponential stability and disturbance attenuation for
  discrete-time switched systems under arbitrary switching.
\newblock {\em IEEE Transactions on Automatic Control}, 2017, doi:
  10.1109/TAC.2017.2748918.

\bibitem{xiang2017reachable}
Weiming Xiang, Hoang-Dung Tran, and Taylor~T Johnson.
\newblock On reachable set estimation for discrete-time switched linear systems
  under arbitrary switching.
\newblock In {\em American Control Conference (ACC), 2017}, pages 4534--4539.
  IEEE, 2017.

\bibitem{xiang2017output_arxiv}
Weiming Xiang, Hoang-Dung Tran, and Taylor~T Johnson.
\newblock Output reachable set estimation and verification for multi-layer
  neural networks.
\newblock {\em arXiv preprint arXiv:1708.03322}, 2017.

\bibitem{xiang2017output}
Weiming Xiang, Hoang-Dung Tran, and Taylor~T Johnson.
\newblock Output reachable set estimation for switched linear systems and its
  application in safety verification.
\newblock {\em IEEE Transactions on Automatic Control}, 62(10):5380--5387,
  2017.

\bibitem{xiang2017reachable_arxiv}
Weiming Xiang, Hoang-Dung Tran, and Taylor~T Johnson.
\newblock Reachable set computation and safety verification for neural networks
  with \textsc{R}e\textsc{LU} activations.
\newblock {\em arXiv preprint arXiv: 1712.08163}, 2017.

\bibitem{xiang2014stabilization}
Weiming Xiang and Jian Xiao.
\newblock Stabilization of switched continuous-time systems with all modes
  unstable via dwell time switching.
\newblock {\em Automatica}, 50(3):940--945, 2014.

\bibitem{xu2017reachable}
Zhaowen Xu, Hongye Su, Peng Shi, Renquan Lu, and Zheng-Guang Wu.
\newblock Reachable set estimation for \textsc{M}arkovian jump neural networks
  with time-varying delays.
\newblock {\em IEEE Transactions on Cybernetics}, 47(10):3208--3217, 2017.

\bibitem{zhang2016mode}
Lixian Zhang and Weiming. Xiang.
\newblock Mode-identifying time estimation and switching-delay tolerant control
  for switched systems: An elementary time unit approach.
\newblock {\em Automatica}, 64:174--181, 2016.

\bibitem{zhang2016synchronization}
Lixian Zhang, Yanzheng Zhu, and Wei~Xing Zheng.
\newblock Synchronization and state estimation of a class of hierarchical
  hybrid neural networks with time-varying delays.
\newblock {\em IEEE Transactions on Neural Networks and Learning Systems},
  27(2):459--470, 2016.

\bibitem{zhang2017state}
Lixian Zhang, Yanzheng Zhu, and Wei~Xing Zheng.
\newblock State estimation of discrete-time switched neural networks with
  multiple communication channels.
\newblock {\em IEEE Transactions on Cybernetics}, 47(4):1028--1040, 2017.

\bibitem{6852423}
S.~T. Zhao and X.~W. Gao.
\newblock Neural network adaptive state feedback control of a magnetic
  levitation system.
\newblock In {\em The 26th Chinese Control and Decision Conference (2014
  CCDC)}, pages 1602--1605, May 2014.

\bibitem{zuo2014non}
Zhiqiang Zuo, Zhenqian Wang, Yinping Chen, and Yijing Wang.
\newblock A non-ellipsoidal reachable set estimation for uncertain neural
  networks with time-varying delay.
\newblock {\em Communications in Nonlinear Science and Numerical Simulation},
  19(4):1097--1106, 2014.

\end{thebibliography}
\end{document}